\let\proof\@undefined
\let\endproof\@undefined
\newcommand{\sra}{\shortrightarrow}
\newtheorem{definition}{Definition}
\newtheorem{thm}{Theorem}
\newtheorem{prop}{Proposition}
\newtheorem{assum}{Assumption}
\newtheorem{example}{Example}
\newcommand{\dens}{x}
\newcommand{\fluxin}{s}
\newcommand{\fluxout}{d}
\newcommand{\densjam}{\overline{\dens}}
\newcommand{\network}{\mathcal{G}}
\newcommand{\Verts}{\mathcal{V}}
\newcommand{\Links}{\mathcal{L}}
\newcommand{\Lin}{\mathcal{L}^\textnormal{in}}
\newcommand{\Lout}{\mathcal{L}^\textnormal{out}}
\newcommand{\Lup}{\mathcal{L}^\textnormal{up}}
\newcommand{\Ldown}{\mathcal{L}^\textnormal{down}}
\newcommand{\Ladj}{\mathcal{L}^\textnormal{adj}}
\newcommand{\dense}{\dens^\textnormal{e}}
\newcommand{\Ramps}{\mathcal{R}}
\newcommand{\head}{{\sigma}}
\newcommand{\tail}{{\tau}}
\newcommand{\Domain}{\mathcal{X}}
\newcommand{\Dom}{\mathcal{X}}
\newcommand{\flow}{f}
\newcommand{\alphaF}{\alpha^\textnormal{F}}
\newcommand{\alphaNF}{\alpha^\textnormal{NF}}
\renewcommand{\phi}{\varphi}
\DeclareMathSymbol{\lsb@l}{\mathalpha}{letters}{`l}
\tikzstyle{link}=[line width=2pt, ->,>=latex]
\tikzstyle{junc}=[draw,circle,inner sep=1pt,minimum width=8pt]
\tikzstyle{onramp}=[line width=2pt, dashed,->,>=latex]
\title{Mixed Monotonicity of Partial First-In-First-Out Traffic Flow Models}
\author{Samuel Coogan, Murat Arcak, Alexander A. Kurzhanskiy\thanks{S. Coogan is with the Electrical Engineering Department at the University of California, Los Angeles, \texttt{scoogan@ucla.edu}. M. Arcak is with the Electrical Engineering and Computer Sciences Department at the University of California, Berkeley, \texttt{arcak@eecs.berkeley.edu}. A. Kurzhanskiy is with California Partners for Advanced Transportation Technology at the University of California, Berkeley, \texttt{akurzhan@berkeley.edu}. This work was supported in part by NSF Grant CNS-1446145.}}
\begin{document}
\maketitle
\begin{abstract}
 In vehicle traffic networks, congestion on one outgoing link of a diverging junction often impedes flow to other outgoing links, a phenomenon known as the first-in-first-out (FIFO) property. Simplified traffic models that do not account for the FIFO property result in monotone dynamics for which powerful analysis techniques exist. FIFO models are in general not monotone, but have been shown to be mixed monotone---a generalization of monotonicity that enables similarly powerful analysis techniques. In this paper, we study traffic flow models for which the FIFO property is only partial, that is, flows at diverging junctions exhibit a combination of FIFO and non-FIFO phenomena.  We show that mixed monotonicity extends to this wider class of models and establish conditions that guarantee convergence to an equilibrium.

\end{abstract}
\section{Introduction}

In models of vehicular traffic flow, if congestion on one outgoing link of a diverging junction impedes the incoming flow to other outgoing links, the diverging junction is said to satisfy the \emph{first-in-first-out (FIFO)} property. If complete congestion on one outgoing link completely blocks access to all other outgoing links, we say the model is a \emph{full FIFO} model. 

Whether a node model of a diverging junction is FIFO or non-FIFO affects the qualitative dynamical behavior of traffic flow through the junction. %
An attractive feature of non-FIFO node models is that the resulting traffic network dynamics are monotone, as is shown in \cite{Lovisari:2014qv}. Trajectories of a \emph{monotone} dynamical system preserve a partial order over the system's state \cite{Hirsch:1985fk,Smith:2008fk}. Preservation of this partial order imposes restrictions on the behavior exhibited by such systems which is exploited for, \emph{e.g.}, characterization of equilibria and stability analysis in \cite{Lovisari:2014qv}.

In general, FIFO node models are not monotone. Nonetheless, in \cite{Coogan:2015mz}, it is shown that a particular full FIFO model is \emph{mixed monotone}, which significantly generalizes the class of monotone systems \cite{Smith:2008rr, Coogan:2014ty}. 

However, non-FIFO models and full FIFO models are often inadequate.  {Non-FIFO} models imply that, even if one of the output links is jammed, the resulting traffic spillback has no effect on those vehicles directed to the other output links, an unreasonable assumption as the FIFO effect in traffic flow networks has been observed even for multilane diverging junctions \cite{Cassidy2002563,Munoz:2002qv}. On the other hand, a full FIFO model is often too restrictive~\cite{wrightetal16a}. To derive a class of {\it partial} FIFO models,~\cite{bliemer07} and \cite{shiomi2015} have suggested modeling lanes of an input link as separate links. The main drawback of this approach is that it greatly complicates the size and dimensionality of the model since every node in the network becomes a multi-input-multi-output junction.%

Here, we propose a general class of \emph{partial FIFO} junction models where the full FIFO rule is relaxed; see Figure \ref{fig:div}. We show that the resulting dynamics are mixed monotone, and we use mixed monotonicity to establish convergence to an equilibrium point of the resulting dynamics. By considering the dynamical properties of FIFO traffic flow models that are not full FIFO, this paper bridges an important gap in the literature.

In Section \ref{sec:network-flow-model}, we present a general model of traffic flow that encompasses many existing non-FIFO and full FIFO models and allows for a large class of partial FIFO models. In Section \ref{sec:mixed-monot-traff}, we show that our general model is mixed monotone. In Section \ref{sec:examples}, we present several specific, practically motivated instantiations of this general model. We show how mixed monotonicity is used for analysis in Section \ref{sec:example} and provide concluding remarks in Section \ref{sec:conclusions}.

\begin{figure}
  \centering
\begin{tabular}{@{}c c@{}}
 \begin{tikzpicture}[scale=.08]
\path[use as bounding box] (15,16.9) rectangle (70,38);
\fill[gray] (15,30) rectangle (38,38);
\fill[gray] (37.8,32) rectangle (70,38);
\fill[gray] (35,30) .. controls +(10,0) and (50,16.9) .. (60,16.9) -- (70,16.9) -- (70,21) -- (62,21) .. controls (52,21) and (48,34) .. (35,34);
\draw[dashed, white] (15,32) -- (36,32) .. controls +(10,0) and (51,19) .. (61,19) -- (70,19);
\draw[dashed, white] (15,34) -- (70,34);
\draw[dashed, white] (15,36) -- (70,36);
  \end{tikzpicture}  

&
  \begin{tikzpicture}[scale=1.5]
    \node[junc] (d) at (-1,0) {};
    \node[junc] (a) at (0,0) {};
    \node[junc] (b) at (30:1) {};
    \node[junc] (c) at (-30:1) {};
    \draw[link] (a) --node[above]{$2$} (b);
    \draw[link] (a) --node[below]{$3$} (c);
    \draw[link] (d) -- node[above, pos=.3]{$1$} (a);
  \end{tikzpicture}
\end{tabular}
\caption{A diverging junction with one incoming link and two outgoing links as on a freeway (left) and schematically (right). When traffic flow is assumed to obey the \emph{first-in-first-out} (FIFO) property, congestion on link 2 (resp. 3) impedes flow to link 3 (resp. 2) whereas in a non-FIFO flow model, the flow from link 1 to link 2 (resp. link 3) is independent of the congestion on link 3 (resp. 2). For the FIFO case, if complete congestion on one outgoing link completely impedes the outgoing flow from link 1, the node model at the diverging junction is said to be \emph{full FIFO}, otherwise it is a \emph{partial FIFO} model.}
\label{fig:div}
\end{figure}
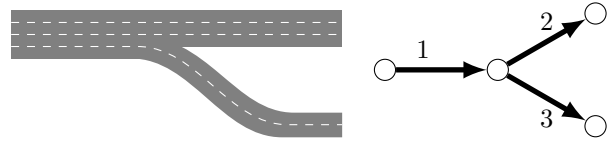

\section{Network Flow Model}
\label{sec:network-flow-model}
A traffic flow network consists of a directed graph $\network=(\Verts,\Links)$ with \emph{junctions} or \emph{nodes} $\Verts$ and \emph{links} $\Links$.   Let $\head(l)$ and $\tail(l)$ denote the head and tail junction of link $l\in\Links$, respectively, where we assume $\head(l)\neq \tail(l)$, \emph{i.e.}, no self-loops. Traffic flows from $\tail(l)$ to $\head(l)$. 

For each $v\in\Verts$, we denote by $\Lin_v\subset \Links$ the set of input links to node $v$ and by $\Lout_v\subset \Links$ the set of output links, \emph{i.e.},  
\begin{align}
  \label{eq:2}
\Lin_v&\triangleq\{l\mid\head(l)=v\},\\
\Lout_v&\triangleq\{l\mid\tail(l)=v\}.
\end{align}
For each $l$, we denote by $\Lup_l\subset\Links$ the set of links immediately upstream of link $l$, and by $\Ldown_l\subset \Links$ the set of links immediately downstream of link $l$. We say that links $l$ and $k$ are \emph{adjacent} if $\tail(l)=\tail(k)$ and $l\neq k$ and let $\Ladj_l\subset \Links$ be the set of links adjacent to link $l$. Thus
\begin{align}
  \label{eq:46}
\Lup_l&\triangleq\{k\in\Links\mid \head(k)=\tail(l)\} = \Lin_{\tail(l)},\\
\Ldown_l&\triangleq\{k\in\Links\mid \tail(k)=\head(l)\}= \Lout_{\head(l)},\\
\Ladj_l&\triangleq\{k\in\Links\mid \tail(k)=\tail(l), k\neq l\} = \Lout_{\tail(l)}\backslash\{l\}.
\end{align}

Each link $l\in\Links$ has state $x_l(t)\geq 0$ evolving over time that is the density of vehicles on link $l$. We denote the  state of the network by $x(t)\triangleq \{x_l(t)\}_{l\in\Links}$. Vehicles flow from link to link over time; the state-dependent flow of vehicles from link $k$ to link $l$ is denoted by $f_{k\sra l}(x)$. We assume $f_{k\sra l}(x)\equiv 0$ if $\head(k)\neq \tail(l)$ so that flow is allowed only between links connected via a junction.  Furthermore, vehicles flow to link $l$ from outside the network at rate $ f_{\sra l}(x)$ and vehicles leave the network from link $l$ at rate $f_{l\sra}(x)$ so that
\begin{align}
  \label{eq:42}
  \dot{x}_l&=\sum_{k\in\Links} f_{k\sra l}(x)-\sum_{j\in\Links}f_{l\sra j}(x) +f_{\sra l}(x) - f_{l\sra }(x)\\
  \label{eq:42-2}&=:F_l(x).
\end{align}

In Section \ref{sec:examples}, we suggest specific forms for $f_{k\sra l}$, $f_{l \sra}$, and $f_{\sra l}$ based on phenomenological properties of traffic flow. %

\newcommand{\fF}{f^\textnormal{F}}
\newcommand{\fNF}{f^\textnormal{NF}}

We further assume that each $f_{k\sra l}(x)$ is decomposable as
\begin{align}
  \label{eq:43}
  f_{k\sra l}(x) =   \fF_{k\sra l}(x)+ \fNF_{k\sra l}(x),
\end{align}
where $\fF_{k\sra l}(x)$ is the flow from link $k$ to link $l$ that is subject to the FIFO phenomenon and $\fNF_{k\sra l}(x)$ is the flow from link $k$ to link $l$ that is not subject to the FIFO phenomenon.

The following captures the fundamental properties of traffic flow networks.
\begin{assum}
\label{assum:main}
For all $l,k\in\Links$, the functions $f_{k\sra l}(x)$, $f_{l\sra}(x)$, $f_{\sra l}(x)$ are locally Lipschitz continuous. For all $x\in\Dom$ where the given derivative exists,

\noindent \textbf{External flows:}
  \begin{enumerate}[label={(A\arabic*)},labelindent=*,leftmargin=*]
\item\label{eq:44}
 $\displaystyle\frac{\partial f_{\sra l}}{\partial x_m}(x) \geq 0$ for all $ l,m\in\Links \text{ such that } m\neq l$,\\
\item \label{eq:44-2}   $\displaystyle\frac{\partial f_{l \sra}}{\partial x_m}(x) \leq 0$ for all $ l,m\in\Links \text{ such that } m\neq l$.
  \end{enumerate}
Interpretation:
  \begin{itemize}[leftmargin=*]
\item \ref{eq:44}: For any $m\neq l$, increasing the density on link $m$ can only increase the exogenous flow into link $l$. For example, congestion on link $m$ of the network causes vehicles that wish to enter the network to reroute and enter at link $l$.
\item \ref{eq:44-2}: For any $m\neq l$, increasing the density on link $m$ can only decrease the flow that exits the network from link $l$. For example, downstream congestion on link $m$ impedes the outflow of vehicles via an offramp on link $l$. 
  \end{itemize}
\noindent \textbf{Local dependence:}
\begin{enumerate}[label={(A\arabic*)},labelindent=*,leftmargin=*]
\setcounter{enumi}{2}
\item \label{eq:44-3}  $\displaystyle\frac{\partial \fNF_{k \sra l}}{\partial x_m}(x) \equiv 0$ for all $l,k,m\in\Links$ such that \\$m\not\in\Lin_{\head(k)}\cup\Lout_{\head(k)}$,
\item \label{eq:44-4}  $\displaystyle\frac{\partial \fF_{k \sra l}}{\partial x_m}(x) \equiv 0$ for all $l,k,m\in\Links$ such that  \\ $m\not\in\Lin_{\head(k)}\cup\Lout_{\head(k)}$.
  \end{enumerate}
Interpretation:
\begin{itemize}[leftmargin=*]
\item \ref{eq:44-3} and \ref{eq:44-4}: The flow rate from link $k$ to link $l$ through some junction $v=\head(k)=\tail(l)\in\Verts$ is instantaneously affected by the change in density of vehicles on link $m$ only if $m$ is incoming or outgoing of junction $v$.
\end{itemize}

\noindent \textbf{Net incoming and outgoing flows:}
\begin{enumerate}[label={(A\arabic*)},labelindent=*,leftmargin=*,itemsep=2pt]
\setcounter{enumi}{4}
\item \label{eq:44-7} $\displaystyle\frac{\partial }{\partial x_m}\Bigg(\sum_{j\in\Links}\fF_{j\sra l}\Bigg) (x)\geq 0$ for all $l,m\in\Links$ such that $m\in\Lup_l$,
\item \label{eq:44-72} $\displaystyle\frac{\partial }{\partial x_m}\Bigg(\sum_{j\in\Links}\fNF_{j\sra l}\Bigg) (x)\geq 0$ for all $l,m\in\Links$ such that $m\in\Lup_l$,
\item \label{eq:44-8}  $\displaystyle\frac{\partial }{\partial x_m}\Bigg(\sum_{j\in\Links}f_{l\sra j}\Bigg) (x)\leq 0$ for all $l,m\in\Links$ such that $m\in \Lin_{\head(l)}\cup\Lout_{\head(l)}, \ m\neq l$.
  \end{enumerate}
Interpretation:
\begin{itemize}[leftmargin=*]
\item \ref{eq:44-7} and \ref{eq:44-72}: For any link $m$ immediately upstream of link $l$ (that is, $\head(m)=\tail(l)$), increasing the density of vehicles on link $m$ cannot decrease the net incoming FIFO or non-FIFO flow to link $l$.
\item \ref{eq:44-8}: For any link $m\neq l$ incoming or outgoing from junction $\head(l)$, increasing the density of vehicles on link $m$ cannot increase the net outgoing flow from link $l$.
\end{itemize}
\noindent \textbf{FIFO and non-FIFO flows:}
\begin{enumerate}[label={(A\arabic*)},labelindent=*,leftmargin=*]
\setcounter{enumi}{7}
\item \label{eq:44-5} $\displaystyle\frac{\partial \fNF_{k \sra l}}{\partial x_m}(x) \geq 0$ for all $ l,k,m\in\Links \text{ such that } m\in\Ladj_l$,
\item \label{eq:44-6} $\displaystyle\frac{\partial \fF_{k \sra l}}{\partial x_m}(x) \leq 0$, for all $l,k,m\in\Links \text{ such that } m\in\Ladj_l$.
  \end{enumerate}
Interpretation:
\begin{itemize}[leftmargin=*]
\item \ref{eq:44-5}: For any link $m$ adjacent to link $l$, increasing the density of link $m$ can only increase the non-FIFO flow from an upstream link $k$ to $l$. This may occur if, \emph{e.g.}, vehicles reroute to avoid increased congestion on link $m$.
\item \ref{eq:44-6}: For any link $m$ adjacent to link $l$, increasing the density of link $m$ can only decrease the FIFO flow from an upstream link $k$ to link $l$. This captures the fundamental feature of FIFO flow whereby congestion on link $m$ blocks access to link $l$.
\end{itemize}
\end{assum}

Requirements \ref{eq:44}--\ref{eq:44-8} are standard for traffic flow networks. The requirement \ref{eq:44-5} is found in, \emph{e.g.}, \cite[Definition 2]{Como:2015ne} where it is used to establish monotonicity for non-FIFO policies. Requirement \ref{eq:44-6} captures the FIFO phenomenon.

\section{Mixed Monotonicity of Traffic Flow}
\label{sec:mixed-monot-traff}
\begin{definition}[Mixed Monotone]
\label{def:cwmm}
The system $\dot{x}=G(x)$, $x\in X\subseteq \mathbb{R}^n$ where $X$ has convex interior and $G$ is locally Lipschitz is \emph{mixed monotone} if there exists a locally Lipschitz continuous function $g(x,y)$ satisfying:
\begin{enumerate}
\item $g(x,x)=G(x)$ for all $x\in X$,\\
\item $\displaystyle \frac{\partial g_i}{\partial x_j} (x,y)\geq 0$ for all $x,y\in X$ and all  $i\neq j$ whenever the derivative exists,\\
\item $\displaystyle \frac{\partial g_i}{\partial y_j} (x,y)\leq 0$ for all $x,y\in X$ and all  $i,  j$ whenever the derivative exists.
\end{enumerate}
The function $g(x,y)$ is called a \emph{decomposition function} for the system.
\end{definition}

Let $\dot{x}=G(x)$ be mixed monotone with decomposition function $g(x,y)$ and consider the dynamical system
\begin{align}
  \label{eq:9}
  \dot{x}&=g(x,y),\\
  \label{eq:9-2} \dot{y}&=g(y,x).
\end{align}
The symmetry implies that if $(x(t),y(t))$ is a trajectory of \eqref{eq:9}--\eqref{eq:9-2}, then $(y(t),x(t))$ is also a trajectory.  Furthermore, observe that $\{(x,y)\mid x=y\}$ is an invariant subspace of \eqref{eq:9}--\eqref{eq:9-2} and trajectories contained within this subspace correspond to (two copies of) trajectories of the original system $\dot{x}=G(x)$, thus we refer to \eqref{eq:9}--\eqref{eq:9-2} as the \emph{embedding} system.

The importance of mixed monotonicity lies in the observation that the induced embedding system is monotone with respect to the partial order induced by the orthant $\mathbb{R}_{\geq 0}^n\times \mathbb{R}_{\leq 0}^n$, that is, \eqref{eq:9}--\eqref{eq:9-2}  is monotone with respect to the partial order $\preceq$ defined by $(x,y)\preceq (v,w)$ if and only if $x\leq v$ and $w\leq y$. This observation allows the powerful tools available for monotone systems to be applied to mixed monotone systems. For example, global convergence for \eqref{eq:9}--\eqref{eq:9-2} implies global convergence of the mixed monotone system $\dot{x}=G(x)$. In Section \ref{sec:example}, we apply this technique to prove asymptotic convergence of the example in Figure \ref{fig:div}.

\begin{thm}
\label{thm:MM}
The traffic flow network model \eqref{eq:42} satisfying Assumption \ref{assum:main} is mixed monotone.
\end{thm}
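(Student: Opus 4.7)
The plan is to exhibit an explicit decomposition function $g(x,y)=(g_l(x,y))_{l\in\Links}$ for the vector field $F(x)=(F_l(x))_{l\in\Links}$ defined in \eqref{eq:42-2}. Inspecting $F_l$ term by term, under Assumption~\ref{assum:main} every off-diagonal partial derivative that can be nonzero already has the sign needed for monotonicity, with one exception: by \ref{eq:44-6} summed over $k\in\Links$, the FIFO aggregate $\sum_{k}\fF_{k\sra l}$ is nonincreasing in $x_m$ whenever $m\in\Ladj_l$. I would therefore leave every other summand alone and substitute $y$ for $x$ only on those coordinates: for each $l\in\Links$, define
\begin{align*}
g_l(x,y)&=\sum_{k\in\Links}\fF_{k\sra l}(z^l(x,y))+\sum_{k\in\Links}\fNF_{k\sra l}(x)\\
&\quad -\sum_{j\in\Links}f_{l\sra j}(x)+f_{\sra l}(x)-f_{l\sra}(x),
\end{align*}
where $z^l_m(x,y)=y_m$ for $m\in\Ladj_l$ and $z^l_m(x,y)=x_m$ otherwise.

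I would then verify the three items of Definition~\ref{def:cwmm}. Item (1) is immediate since $z^l(x,x)=x$ gives $g_l(x,x)=F_l(x)$, and $g$ inherits local Lipschitz continuity from the flow functions. For item (2), fix $m\neq l$ and compute $\partial g_l/\partial x_m$ summand by summand: local dependence \ref{eq:44-3}--\ref{eq:44-4} (and the same reasoning applied to each $f_{l\sra j}$) kills every contribution outside $\Lup_l\cup\{l\}\cup\Ladj_l\cup\Lin_{\head(l)}\cup\Lout_{\head(l)}$; on those indices the FIFO aggregate contributes $0$ when $m\in\Ladj_l$ by the $y$-substitution and a nonnegative amount when $m\in\Lup_l$ by \ref{eq:44-7}; the non-FIFO aggregate contributes nonnegatively by \ref{eq:44-72} when $m\in\Lup_l$ and by \ref{eq:44-5} summed over $k$ when $m\in\Ladj_l$; the outgoing sum $-\sum_j f_{l\sra j}$ contributes nonnegatively by \ref{eq:44-8}; and the external terms contribute nonnegatively by \ref{eq:44} and \ref{eq:44-2}. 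For item (3), $g_l$ depends on $y$ only through $z^l_m=y_m$ for $m\in\Ladj_l$, and for such $m$ the partial $\partial g_l/\partial y_m=\partial\!\bigl(\sum_{k}\fF_{k\sra l}\bigr)/\partial x_m\bigr|_{z^l}\le 0$ by \ref{eq:44-6} summed over $k$; all remaining $y$-partials vanish.

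The main obstacle is not analytical but structural: I need to recognize that the FIFO flow across adjacent output links is the \emph{only} coupling that breaks monotonicity of $F$, and that the \emph{aggregate}-level hypotheses \ref{eq:44-7} and \ref{eq:44-72} are precisely what is required to control the upstream partial derivatives—since nothing is assumed about the signs of individual $\partial\fF_{k\sra l}/\partial x_m$ or $\partial\fNF_{k\sra l}/\partial x_m$ for $m\in\Lup_l$, one cannot hope to substitute link-pair by link-pair. Once the substitution pattern above is fixed, each sign verification reduces to invoking a single labeled assumption on the corresponding summand, and no further analytical content is required.
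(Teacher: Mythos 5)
Your proposal is correct and uses essentially the same construction as the paper: the identical decomposition function, with $y$ substituted only into the FIFO inflow terms on the coordinates in $\Ladj_l$ via the same selector map $z^l$, and the same assumption-by-assumption sign verification. No substantive differences to report.
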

\begin{proof}
  We construct an appropriate decomposition function $g(x,y)$. For each $l\in\Links$, let 
  $  z^l(x,y): \Dom\times\Dom\to\Dom$
be defined elementwise as
\begin{align}
  \label{eq:56}
  z^l_k(x,y)=
  \begin{cases}
    y_k&\text{if }k\in\Ladj_l\\
    x_k&\text{else}
  \end{cases}
\quad \text{for all }l\in\Links.
\end{align}
Define 
\begin{align}
\nonumber  g_l(x,y)&=\sum_{k\in\Links} \left(\fF_{k\sra l}(z^l(x,y))+\fNF_{k\sra l}(x)\right)\\
  \label{eq:58}&-\sum_{j\in\Links}\left(\fF_{l\sra j}(x)+\fNF_{l\sra j}(x)\right)+f_{\sra l}(x) - f_{l\sra }(x)
\end{align}
and let $g(x,y)=\{g_l(x,y)\}_{l\in\Links}$. Then $g_l(x,x)=F_l(x)$ given in \eqref{eq:42}--\eqref{eq:42-2} for all $l\in\Links$.
We next show
\begin{align}
  \label{eq:59}
\frac{\partial g_l}{\partial x_m}(x,y)\geq 0 \text{ for all } m\neq l.
\end{align}
To this end, we show
\begin{align}
  \label{eq:60}
&  \frac{\partial}{\partial x_m}\left(\sum_{k\in\Links} \left(\fF_{k\sra l}(z^l(x,y))+\fNF_{k\sra l}(x)\right)\right) \geq 0\quad \forall m\neq l,\\
&  \label{eq:60-2}  \frac{\partial}{\partial x_m}\left(\sum_{j\in\Links}\left(\fF_{l\sra j}(x)+\fNF_{l\sra j}(x)\right)\right) \leq 0\quad \forall m\neq l,
\end{align}
which, combined with \ref{eq:44} and \ref{eq:44-2} of Assumption \ref{assum:main}, proves \eqref{eq:59}.  We have that \eqref{eq:60-2} holds for all $m\in\Lin_{\head(l)}\cup\Lout_{\head(l)}$ with $m\neq l$ by \ref{eq:44-8}, and \ref{eq:44-3}--\ref{eq:44-4} ensures that \eqref{eq:60-2} holds with equality for all $m\not\in\Lin_{\head(l)}\cup\Lout_{\head(l)}$. For $m\in\Lup_l$, \eqref{eq:60} holds from \ref{eq:44-7} and \ref{eq:44-72}. For $m\in\Ladj_l$, we have $\frac{\partial}{\partial x_m}(\fF_{k\to\l}(z^l(x,y))=0$ for all $k$ by \eqref{eq:56}, and $\frac{\partial}{\partial x_m}(\fNF_{k\to\l}(x))\geq 0$ by \ref{eq:44-5}, satisfying \eqref{eq:60}. For $m\not\in \Ladj_l\cup\Lup_l$, we have \eqref{eq:60} holds with equality by \ref{eq:44-3}--\ref{eq:44-4}.

We now show
\begin{align}
  \label{eq:61}
  \frac{\partial g_l}{\partial y_m}(x,y)\leq 0 \text{ for all } m\neq l.
\end{align}
We have that \eqref{eq:61} holds trivially for all $m\not\in\Ladj_l$ by \eqref{eq:56}. For $m\in\Ladj_l$, we have
\begin{align}
  \label{eq:1}
    \frac{\partial g_l}{\partial y_m}(x,y) &= \frac{\partial}{\partial y_m}\left(\sum_{j\in\Links} \fF_{j\sra l}(z^l(x,y))\right)\leq 0,
\end{align}
where the inequality follows by \ref{eq:44-6}.
\end{proof}

We remark that a sufficient condition for mixed monotonicity of $\dot{x}=G(x)$ is for each off-diagonal entry of the Jacobian matrix $\frac{\partial G}{\partial x}$ to not change sign over the domain $\Dom$, that is, either $\frac{\partial G_i}{\partial x_j}(x)\geq 0$ for all $x\in\Dom$  or $\frac{\partial G_i}{\partial x_j}(x)\leq 0$ for all $x\in\Dom$ for all $i\neq j$. This condition is proved for the discrete-time case in \cite{Coogan:2014ty} and the proof for the continuous-time case is similar. In general, partial FIFO models do not satisfy this condition; this is attributable to the different sign conditions in \ref{eq:44-5} and \ref{eq:44-6} whereby an increase on some link $k\in\Ladj_l$ may increase the non-FIFO flow to link $l$ and decrease the FIFO flow to link $l$. Thus we require a different construction for the decomposition function as shown in the proof of Theorem \ref{thm:MM}. 

We further observe that standard monotonicity is often generalized to partial orders induced by arbitrary orthants of $\mathbb{R}^n$ \cite{Angeli:2003fv}, and one may wonder if mixed monotonicity for traffic networks is equivalent to this generalization. As observed in \cite{Coogan:2015mz}, it is not difficult to construct traffic network topologies that are not monotone with respect to any orthant order, thus mixed monotonicity is strictly more general.

\section{Examples Of Models Satisfying Assumption \ref{assum:main}}
\label{sec:examples}
We now present several related examples satisfying \ref{eq:44}--\ref{eq:44-6} based on the \emph{supply} and \emph{demand} concept of traffic flow. We assume each link $l\in\Links$ possesses a \emph{jam density} $\densjam_l$ such that $\dens_l(t)\in[0,\densjam_l]$ for all time and thus $\Dom=\prod_{l\in\Links}[0,\densjam_l]$. We further assume each link possesses a state-dependent \emph{demand} function $\fluxout_l(x_l)$ and a state-dependent \emph{supply} function $\fluxin_l(x_l)$ satisfying:
\begin{assum}
\label{assum:2}
  For each $l\in\Links$:
  \begin{itemize}
  \item The demand function $\fluxout_l(\dens_l):[0,\densjam_l]\to \mathbb{R}_{\geq 0}$ is strictly increasing and Lipschitz continuous with $\fluxout_l(0)=0$.
  \item The supply function $\fluxin_l(\dens_l):[0,\densjam_l]\to \mathbb{R}_{\geq 0}$ is strictly decreasing and Lipschitz continuous with $\fluxin_l(\densjam_l)=0$.
  \end{itemize}
\end{assum}

The demand of a link is interpreted as the maximum outflow of the link, and the supply of a link is interpreted as the maximum inflow of the link. %

Let $\Ramps=\{l\in\Links \mid \Lin_{\tail(l)}=\emptyset\}$, that is, $\Ramps$ is the set of links for which there are no upstream links. We assume exogenous traffic enters the network only through $\Ramps$ so that
\begin{align}
  \label{eq:39}
f_{\sra l}(x)\equiv 0\quad \text{for all }l\not\in\Ramps.
\end{align}

For each $l\in\Ramps$, we assume there exists a constant exogenous inflow demand $\delta_l$ such that
\begin{align}
  \label{eq:40}
  \flow_{\sra l}(x) = \min\{\delta_l,\fluxin_l(x_l)\}\text{ for all }l\in\Ramps.
\end{align}

We further assume that $\flow_{l\sra}(x)$ is a fixed fraction $\gamma_l$ of the total outflow from link $l$ if there are any links downstream of $l$, otherwise $\flow_{l\sra}(x)$ is equal to the demand of link $l$. That is, for all $l\in\Links$,
\begin{align}
  \label{eq:41}
  \flow_{l\sra}(x)=\begin{cases}\gamma_l \sum_{j\in\Links}\flow_{l\sra j}(x)& \text{ if } \Lout_{\head(l)}\neq \emptyset\\
\fluxout_l(x_l)&\text{otherwise},
\end{cases}
\end{align}
where $\gamma_l\geq 0$ for each $l$ such that $\Lout_{\head(l)}\neq \emptyset$.

Finally, we assume there exist fixed turn ratios $\beta_{l}> 0$ for each $l$ with $\Lup_l\neq \emptyset$ that describe how vehicles route through the network. The role of these turn ratios is made explicit subsequently, but the interpretation is that $\beta_l$ is the fraction of the upstream demand that is bound for link $l$.

 It remains to characterize $\flow_{k\sra l}(x)$ for all $l,k\in\Links$.

\begin{example}[non-FIFO]
\label{ex:1}
For all $l\in\Links$, let
  \begin{align}
    \label{eq:45}
    \alphaNF_l(x)=\min\left\{1,\frac{\fluxin_l(x_l)}{\beta_{l}\sum_{j\in\Lup_l} \fluxout_j(x_j)}\right\}.
  \end{align}
Let $\fF_{k\sra l}(x)\equiv 0$ for all $k,l\in\Links$ and let
\begin{align}
  \label{eq:47}
  \fNF_{k\sra l}(x)=\alphaNF_l(x) \beta_l\fluxout_k(x_k)\quad \forall l\in\Links, \ \forall k\in\Lup_l.
\end{align}
\end{example}

\begin{example}[Full FIFO]
\label{ex:2}
For all $v\in\Verts$, let
  \begin{align}
    \label{eq:45-2}
    \alphaF_v(x)=\min\left\{1,\min_{k\in\Lout_v}\left\{\frac{\fluxin_k(x_k)}{\beta_{k}\sum_{j\in\Lin_v} \fluxout_j(x_j)}\right\}\right\}.
  \end{align}
Let $\fNF_{k\sra l}(x)\equiv 0$ for all $k,l\in\Links$ and let
\begin{align}
  \label{eq:47}
  \fF_{k\sra l}(x)=\alphaF_{\tail(l)}(x) \beta_l\fluxout_k(x_k)\quad \forall l\in\Links, \ \forall k\in\Lup_l.
\end{align}
\end{example}

\begin{example}[Convex combination of non-FIFO and full FIFO]
\label{ex:3}
  Let $\alphaNF_l(x)$ be given as in \eqref{eq:45} and let $\alphaF_v(x)$ be given as in \eqref{eq:45-2}. Suppose there exists $\eta_l\in[0,1]$ for all $l\in\Links$, and let
  \begin{align}
    \label{eq:48}
  \fF_{k\sra l}(x)&=\eta_l\alphaF_{\tail(l)}(x) \beta_l\fluxout_k(x_k)\quad \forall l\in\Links, \ \forall k\in\Lup_l,\\
      \fNF_{k\sra l}(x)&=(1-\eta_l) \alphaNF_l(x) \beta_l\fluxout_k(x_k)\quad \forall l\in\Links, \ \forall k\in\Lup_l.
  \end{align}
\end{example}

Example \ref{ex:3} is proposed in  \cite[Example 4]{Lovisari:2014qv} and is a natural extension of the ideas in Examples \ref{ex:1} and \ref{ex:2}, however it exhibits the following property: it is possible for $\alphaNF_l(x)<1$ yet $\sum_{j\in\Links}\flow_{j\sra l}(x)<\fluxin_l(x_l)$, that is, the supply of link $l$ restricts the flow to link $l$, yet the total inflow of link $l$ is less than this supply. This property may be undesirable, depending on the specific phenomena which the node model is desired to capture. We now suggest an alternative partial FIFO model. To fix ideas, we assume that each diverging junction has exactly one incoming link, that is,
\begin{align}
  \label{eq:6}
 |\Lout_v|>1\implies |\Lin_v|= 1 \quad \forall v\in\Verts. 
\end{align}
This is not too restrictive as we can model a general diverging junction as a merging node and a node satisfying \eqref{eq:6}.%

\newcommand{\thetaF}{\theta^\textnormal{F}}
\begin{example}[Shared and exclusive lanes for a partial FIFO model]
\label{ex:4}
Assume \eqref{eq:6} holds. We consider $\eta_l\in[0,1]$ for each $l\in\Links$ representing the degree of influence on link $l$ of the FIFO restriction at the intersection so that $\eta_l$ is the fraction of traffic bound for link $l$ that is subject to a FIFO restriction and $(1-\eta_l)$ is the fraction of traffic bound for link $l$ that is not subject to a FIFO restriction. For example, $1-\eta_l$ is the fraction of lanes at the diverging junction exclusively bound for link $l$ and $\eta_l$ is the fraction of lanes that are shared among all outgoing links. 

Whenever $\Ladj_l=\emptyset$, we assume $\eta_l=1$ without loss of generality.
Let $\alphaF_v(x)$ be given as in \eqref{eq:45-2} for all $v\in\Verts$.
For $l\in\Links$ such that $\Ladj_l\neq \emptyset$, let $k$ be the unique link such that $\Lup_l=\{k\}$ (uniqueness is guaranteed by \eqref{eq:6}), and let
\begin{align}
  \label{eq:50}
  \fF_{k\sra l}(x)&=\eta_l\alphaF_{\tail(l)}(x) \beta_l\fluxout_k(x_k),\\
  \label{eq:50-2}  \fNF_{k\sra l}(x)&=\min\left\{(1-\eta_l)\beta_l\fluxout_k(x_k),\fluxin_l(x_l) -  \fF_{k\sra l}(x)\right\}.
\end{align}
For $l\in\Links$ such that $\Ladj_l=\emptyset$, we have that $\eta_l=1$ so that
\begin{align}
  \label{eq:11}
\fF_{k\sra l}(x)&=\alphaF_{\tail(l)}(x) \beta_l\fluxout_k(x_k) \quad \forall k\in\Lup_l,\\
\fNF_{k\sra l}(x)&\equiv 0.
\end{align}
\end{example}

We extend Example \ref{ex:4} to the case where there are multiple sets of interacting outgoing links that result in a collection of FIFO restrictions.
\newcommand{\F}{\mathcal{F}}
\begin{example}
  \label{ex:5}
Assume \eqref{eq:6} holds. For each $v\in\Verts$ with $\Lin_v\neq\emptyset$, let $\Phi(v)\subset 2^{\Lin_v}$ be a collection of subsets of $\Lin_v$ so that each $\varphi\in\Phi(v)$, $\phi\subseteq \Lin_v$ is a set links which are mutually governed by a FIFO restriction. When $|\Lout_v|=1$, we assume $\Phi(v)=\{\Lout_v\}$.

 For $\phi\in\Phi(v)$ and $l\in\Lout_v$, let $\eta_{l,\phi}\in[0,1]$ represent the degree of influence on link $l$ of the FIFO restriction set $\phi$. We make the following assumptions:
\begin{align}
  \label{eq:4-0}  l\not\in \phi&\implies \eta_{l,\phi}=0\quad \forall \phi\in\Phi(\tail(l)),\\
  \label{eq:4}  \sum_{\phi\in\Phi(v)}\eta_{l,\phi}&\leq 1\quad \forall l\in\Lin_v \ \forall v\in\Verts.
\end{align}
Define
 $ \bar{\eta}_l=1-  \sum_{\phi\in\Phi(v)}\eta_{l,\phi}.$
For all $v\in\Verts$ such that $|\Lout_v|>1$, define
\begin{align}
  \label{eq:3}
  \alpha_\phi(x)=\min\left\{1,\min_{j\in\phi}\left\{\frac{\fluxin_j(x_j)}{\beta_{j}\fluxout_k(x_k)}\right\}\right\} \quad \forall \phi\in\Phi(v)
\end{align}
where $k$ is the unique upstream link such that $\Lin_v=\{k\}$.

For $l\in\Links$ such that $\Ladj_l\neq \emptyset$, let $k$ be the unique link such that $\Lup_l=\{k\}$, and let
\begin{align}
  \label{eq:8}
   f^\phi_{k\sra l}(x)&=\eta_{l,\phi}  \alpha_\phi(x) \beta_l\fluxout_k(x_k)\quad \forall \phi\in\Phi(\tail(l)), \\
\fF_{k\sra l}(x)&=\sum_{\phi\in\Phi(\tail(l))}f^\phi_{k\sra l}(x),\\
  \label{eq:8-2}  \fNF_{k\sra l}(x)&=\min\left\{\bar{\eta}_l\beta_l\fluxout_k(x_k),\fluxin_l(x_l) -  \fF_{k\sra l}(x)\right\}.
\end{align}
For $l\in\Links$ such that $\Ladj_l=\emptyset$, we again let
\begin{align}
  \label{eq:5}
\fF_{k\sra l}(x)&=\alphaF_{\tail(l)}(x) \beta_l\fluxout_k(x_k) \quad \forall k\in\Lup_l,\\
\fNF_{k\sra l}(x)&\equiv 0,
\end{align}
where $\alphaF_{\tail(l)}(x)$ is as given in \eqref{eq:45-2}.
\end{example}
Taking $\Phi(v)=\{\Lout_v\}$ for all $v\in\Verts$, we see that Example \ref{ex:4} is a special case of Example \ref{ex:5}.

All the examples above satisfy, for all $x\in\Domain$,
\begin{alignat}{2}
  \label{eq:51}
  \flow_{k\sra l}(x)&\leq \beta_l\fluxout_k(x_k)&&\quad \forall k\in\Links \ \forall l\in\Ldown_k,\\
  \sum_{k\in\Links}\flow_{k\sra l}(x)&\leq \fluxin_l(x_l)&&\quad  \forall l\in\Links.
\end{alignat}

If we further assume that $\sum_{k\in\Ldown_l}\beta_k\leq 1$ and 
$(\gamma_l+1){\sum_{k\in\Ldown_l}\beta_k}\leq 1$
for all $l$ such that $\Lout_{\head(l)}\neq \emptyset$, we have that, for all $x\in\Dom$,
\begin{align}
  \label{eq:52}
  \sum_{k\in\Links}\flow_{l\sra k}(x)+\flow_{l\sra}(x)\leq d_l(x_l) \quad \forall  l\in\Links.
\end{align}

\begin{prop}
\label{sec:examples-1}
  Examples \ref{ex:1}--\ref{ex:5} satisfy Assumption \ref{assum:main}.
\end{prop}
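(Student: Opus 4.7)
The plan is to verify each of \ref{eq:44}--\ref{eq:44-6} for each of Examples~\ref{ex:1}--\ref{ex:5} in turn. Several conditions are structural and reduce to others: \ref{eq:44} is immediate because $f_{\sra l}$ in \eqref{eq:40} depends only on $x_l$; \ref{eq:44-2} reduces to \ref{eq:44-8} via \eqref{eq:41}, using that $\fluxout_l(x_l)$ depends only on $x_l$ for terminal links; and \ref{eq:44-3}--\ref{eq:44-4} follow by inspection of the formulas for $\alphaNF_l$, $\alphaF_v$, and $\alpha_\phi$, whose arguments involve only demand and supply of links in $\Lin_v \cup \Lout_v$ with $v = \head(k) = \tail(l)$.

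For Examples~\ref{ex:1}--\ref{ex:3}, the remaining conditions follow from two elementary observations. First, the product $\alphaF_v(x)\beta_l \fluxout_k(x_k)$ (and analogously $\alphaNF_l(x)\beta_l \fluxout_k(x_k)$) can be rewritten as a minimum whose arguments are each monotone in the relevant densities: one argument is $\beta_l \fluxout_k(x_k)$ (non-decreasing in $x_k$, independent of downstream densities), and the other involves only $\fluxin_j/\beta_j$ terms (non-increasing in the corresponding $x_j$, independent of $x_k$). This immediately yields \ref{eq:44-7}--\ref{eq:44-6} for Examples~\ref{ex:1} and~\ref{ex:2}. Example~\ref{ex:3} is a convex combination of the first two, so all sign conditions follow by linearity.

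The main obstacle is verifying \ref{eq:44-72} and \ref{eq:44-8} for Examples~\ref{ex:4} and~\ref{ex:5}, because $\fNF_{k\sra l}$ in \eqref{eq:50-2} is the minimum of $(1-\eta_l)\beta_l \fluxout_k(x_k)$, which is increasing in $x_k$, and $\fluxin_l(x_l) - \fF_{k\sra l}(x)$, which is non-increasing in $x_k$; the minimum of a monotone-increasing and a monotone-decreasing function need not be monotone. The resolution leverages saturation of the FIFO flow. By \eqref{eq:6}, $\Ladj_l \neq \emptyset$ forces $\Lup_l = \{k\}$ to be a singleton, so $\fF_{k\sra l} = \eta_l \beta_l \min(\fluxout_k(x_k), C(x))$ where $C(x) = \min_{j\in\Lout_{\tail(l)}} \fluxin_j(x_j)/\beta_j$ is independent of $x_k$. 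A case analysis shows: when $\fluxout_k(x_k) \leq C(x)$, the first argument of \eqref{eq:50-2} simplifies to $(1-\eta_l)\beta_l \fluxout_k(x_k)$ and is strictly below the second, so the minimum equals that first argument and is increasing in $x_k$; when $\fluxout_k(x_k) \geq C(x)$, the FIFO flow is saturated at $\eta_l \beta_l C(x)$, so the second argument of \eqref{eq:50-2} becomes independent of $x_k$, while the first continues to increase. In both regimes the minimum is non-decreasing in $x_k$, establishing \ref{eq:44-72}.

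The remaining conditions follow by analogous reasoning: \ref{eq:44-5} holds because the first argument of \eqref{eq:50-2} is independent of $x_m$ for $m\in\Ladj_l$ while the second is non-decreasing in $x_m$ (it inherits non-decreasingness from $-\fF_{k\sra l}$, which in turn gives \ref{eq:44-6} by direct inspection of $\alphaF_{\tail(l)}$); and \ref{eq:44-8} is obtained by grouping $\fF_{l\sra j} + \fNF_{l\sra j} = \min(\fF_{l\sra j} + (1-\eta_j)\beta_j \fluxout_l(x_l), \fluxin_j(x_j))$, summing over $j\in\Lout_{\head(l)}$, and applying the same saturation argument. Example~\ref{ex:5} extends Example~\ref{ex:4} by summing over the FIFO restriction sets $\phi\in\Phi(v)$; since each $\alpha_\phi$ has the same structure as $\alphaF_v$ and the bound \eqref{eq:4} preserves non-negativity of $\bar\eta_l$, all sign conditions follow term by term, completing the verification.
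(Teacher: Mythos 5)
Your overall architecture matches the paper's, and your treatment of the hardest condition---\ref{eq:44-72} for Examples \ref{ex:4} and \ref{ex:5}---is correct and is essentially the paper's own saturation argument in a different guise: the paper argues by contradiction that if $\fluxin_l(x_l)-\fF_{k\sra l}(x)$ were the active minimizer in \eqref{eq:50-2} while $\fF_{k\sra l}$ were increasing in $x_k$, then $\alphaF_{\tail(l)}=1$ forces $\fluxin_l(x_l)\geq\beta_l\fluxout_k(x_k)$ and hence the other term is the minimizer; your two-regime split on whether $\fluxout_k(x_k)$ exceeds $C(x)$ is the same observation. (Your direct verification of Examples \ref{ex:1}--\ref{ex:3} in place of the paper's citations to prior work is a legitimate, more self-contained route.)

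Two points need repair, however. First, and most substantively, your verification of \ref{eq:44-8} covers only diverging junctions: you sum over $j\in\Lout_{\head(l)}$ under \eqref{eq:6}, but \ref{eq:44-8} also requires $\frac{\partial}{\partial x_m}\big(\sum_j f_{l\sra j}\big)\leq 0$ for $m\in\Lin_{\head(l)}$, $m\neq l$, which is nonvacuous exactly at merges ($|\Lout_{\head(l)}|=1$, $|\Lin_{\head(l)}|>1$). There your claimed rewriting of $\alphaF_v(x)\beta_k\fluxout_l(x_l)$ as a min whose second argument ``involves only $\fluxin_j/\beta_j$ terms'' is false: the second argument is $\frac{\fluxout_l(x_l)}{\sum_{j\in\Lin_v}\fluxout_j(x_j)}\,\fluxin_k(x_k)$, which depends on the adjacent input densities through the proportional-allocation ratio. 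The condition still holds, but only because that ratio is non-increasing in $x_m$ for $m\neq l$ (this is precisely the paper's step \eqref{eq:12}); that monotonicity must be stated and checked, and the same omission affects your one-line dispatch of \ref{eq:44-8} for Examples \ref{ex:1}--\ref{ex:2}. Second, the claim that Example \ref{ex:5} follows ``term by term'' is too quick for \ref{eq:44-72}: the outer min in \eqref{eq:8-2} is taken against the whole sum $\fF_{k\sra l}=\sum_{\phi}\eta_{l,\phi}\alpha_\phi(x)\beta_l\fluxout_k(x_k)$, in which some $\alpha_\phi$ may be saturated and others not, so there is no single threshold $C(x)$. The fix is the paper's version of your argument: if $\fF_{k\sra l}$ is increasing in $x_k$ then some $\alpha_\phi=1$ with $l\in\phi$ (using \eqref{eq:4-0}), hence $\fluxin_l(x_l)\geq\beta_l\fluxout_k(x_k)$ and $\fluxin_l(x_l)-\fF_{k\sra l}(x)\geq\bar{\eta}_l\beta_l\fluxout_k(x_k)$, so the first argument of \eqref{eq:8-2} is again the minimizer.
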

\begin{proof}

It follows straightforwardly from results in \cite{Lovisari:2014qv} that the conditions of Assumption \ref{assum:main} hold for Example \ref{ex:1}, and, similarly, it follows from results in \cite{Coogan:2015mz} that the assumption holds for Example \ref{ex:2}. From Example 1 and Example 2, the assumption immediately holds for Example \ref{ex:3}. We now show that Example \ref{ex:5} satisfies Assumption \ref{assum:main}, from which it follows that also Example \ref{ex:4} satisfies Assumption \ref{assum:main} because Example \ref{ex:4} is a special case of Example \ref{ex:5}. To that end, we now prove each condition \ref{eq:44}--\ref{eq:44-6} for Example 5.

\begin{itemize}[leftmargin=*]
\item    (Condition \ref{eq:44}).  Follows trivially from \eqref{eq:39} and \eqref{eq:40}.
    \item    (Condition \ref{eq:44-2}). Follows from \eqref{eq:41} and Condition \ref{eq:44-8}, proved below, as well as Conditions \ref{eq:44-3} and \ref{eq:44-4}, proved below.
    \item    (Conditions \ref{eq:44-3} and \ref{eq:44-4}). Follows immediately from the fact that for all $v\in\Verts$ and all $\phi\in\Phi(v)$, $\alphaF_v(x)$ and $\alpha_\phi(x)$ are only functions of $d_l(x_l)$ for $l\in\Lin_v$ and $s_l(x_l)$ for $l\in\Lout_v$ and from \eqref{eq:8-2}.
 \item    (Conditions \ref{eq:44-7} and \ref{eq:44-72}). 
Consider $l\in\Links$. If $|\Lup_l|>1$, then $\Ladj_l=\emptyset$ by \eqref{eq:6} and
\begin{align}
  \label{eq:19}
  \sum_{j\in\Links}\fF_{j\sra l}(x)&\in\left\{\sum_{j\in\Lup_l}{\beta_l}\fluxout_j(x_j),\fluxin_l(x_l)\right\},\\
  \sum_{j\in\Links}\fNF_{j\sra l}(x) &\equiv 0,
\end{align}
and thus \ref{eq:44-7} and \ref{eq:44-72} hold. 

Now suppose $|\Lup_l|=1$ and let $\Lup_l=\{k\}$ so that
\begin{align}
  \label{eq:20}
  \sum_{j\in\Links}\fF_{j\sra l}(x)&=\fF_{k\sra l}(x)=\sum_{\phi\in\Phi(\tail(l))}\flow^\phi_{k\sra l}(x),\\
  \sum_{j\in\Links}\fNF_{j\sra l}(x)&=\fNF_{k\sra l}(x).
\end{align}
We have 
\begin{align}
  \label{eq:23}
     \alpha_\phi(x) \fluxout_k(x_k)=\min\left\{\fluxout_k(x_k),\min_{j\in\phi}\left\{\frac{s_j(x_j)}{\beta_j}\right\}\right\},
\end{align}
and thus \ref{eq:44-7} holds by \eqref{eq:8} and \eqref{eq:20}.

Still supposing $|\Lup_l|=1$ with $\Lup_l=\{k\}$, consider now Condition \ref{eq:44-72}. The only possibility for which this condition would not hold is if $  s_l(x_l)-\fF_{k\sra l}(x)$ is the minimizer in \eqref{eq:8-2} and $  \frac{\partial}{\partial x_k}\fF_{k\sra l}(x)>0$. But
  $\frac{\partial}{\partial x_k}\fF_{k\sra l}(x)>0$
only if $\alpha_\phi(x)=1$ for some $\phi$ for which $l\in\phi$ on some neighborhood of $x$ so that, in particular, $s_l(x_l)>\beta_ld_k(x_k)$. In this case, since $\fF_{k\sra l}(x)\leq \sum_{\phi\in\Phi(\tail(l))}\eta_{l,\phi}\beta_l\fluxout_k(x_k)$, we have that
\begin{align}
  \label{eq:22}
  s_l(x_l)-\fF_{k\sra l}(x)\geq \bar{\eta}_l\beta_l\fluxout_k(x_k),
\end{align}
\emph{i.e.}, $\bar{\eta}_l\beta_l\fluxout_k(x_k)$ is the minimizer in \eqref{eq:8-2} and thus \ref{eq:44-72} holds.

    \item    (Condition \ref{eq:44-8}). 
 Suppose $v\in\Verts$ is such that $|\Lout_v|=1$ and let $\Lout_v=\{k\}$. Consider $l\in\Lin_v$. Then
\begin{align}
  \label{eq:12}
&  \sum_{j\in\Links}\flow_{l\sra j}(x) = \fF_{l\sra k} = \alphaF_v(x)\beta_kd_l(x_l)\\
&=\min\left\{\beta_kd_l(x_l),\frac{d_l(x_l)}{\sum_{j\in\Lin_v}d_j(x_j)}s_k(x_k)\right\}.
\end{align}
Since $\frac{\partial}{\partial x_m}\frac{d_l(x_l)}{\sum_{j\in\Lin_v}d_j(x_j)}\leq 0$ for all $m\in\Lin_v$ with $m\neq l$, and $\frac{ds_k}{dx_k}(x_k)\leq 0$, \ref{eq:44-8} holds.

 Now suppose $|\Lout_v|>1$ so that $|\Lin_v|=1$, and let $\Lin_v=\{l\}$. 
From \eqref{eq:8}--\eqref{eq:8-2}, for all $j\in\Lout_v$ we have
\begin{align}
  \label{eq:17}
  \flow_{l\sra j}(x)&=\fF_{l\sra j}(x)+ \fNF_{l\sra j}(x)\\
&\in\left\{\fF_{l\sra j}(x)+\bar{\eta}_j\beta_j\fluxout_l(x_l),s_j(x_j)\right\}.
\end{align}
Consider some $\phi\in\Phi(v)$. Then 
\begin{align}
  \label{eq:16}
   \alpha_\phi(x) \fluxout_l(x_l)=\min\left\{\fluxout_l(x_l),\min_{i\in\phi}\left\{\frac{s_i(x_i)}{\beta_i}\right\}\right\},
\end{align}
so that $\frac{\partial}{\partial x_m}(   \alpha_\phi(x) \fluxout_l(x_l))\leq 0$ for all $m\in\Lout_v$, and
\begin{align}
  \label{eq:15}
  \frac{\partial \fF_{l\sra j}}{\partial x_m}(x)=  \frac{\partial}{\partial x_m}\left(\sum_{\phi\in\Phi(v)}\eta_{j,\phi}  \alpha_\phi(x) \beta_j\fluxout_l(x_l)\right)\leq 0
\end{align}
for all $m\in\Lout_v$. From \eqref{eq:17}, we have that $  \frac{\partial\flow_{l\sra j}}{\partial x_m}(x)\leq 0$ for all $m\in\Lout_v$ so that \ref{eq:44-8} holds. Finally, when $\Lout_{v}=\emptyset$, condition \ref{eq:44-8} holds trivially for all $l\in\Lin_v$.
   
    \item    (Condition \ref{eq:44-5}). Follows from Condition \ref{eq:44-6} below and \eqref{eq:8-2}. 
    \item    (Condition \ref{eq:44-6}).  Follows from \eqref{eq:8}, \eqref{eq:16}, and the fact that $\frac{d s_i}{d x_i}(x_i)\leq 0$ for all $i\in\Links$.
\end{itemize}
\end{proof}

\section{Analysis from Mixed Monotonicity: Diverging Junction}
\label{sec:example}

We return to the example of Figure \ref{fig:div} and suppose the diverging junction obeys the dynamics given in Example \ref{ex:4} of Section \ref{sec:examples}. We assume for all $l\in\{1,2,3\}$
\begin{align}
  \label{eq:7}
  d_l(x_l)&=a_l(1-\exp(-0.5x_l)),\\
  \label{eq:7-2} s_l(x_l)&=b_l-x_l
\end{align}
where $(a_1,a_2,a_3)=(4,3,2)$ corresponds to the lane configuration in Figure \ref{fig:div}, $(b_1,b_2,b_3)=(6,4,2)$, turn ratios $(\beta_{2},\beta_3)=(0.8,0.2)$ indicates that most of the traffic is bound for link 2, and $(\eta_2,\eta_3)=(0.1,0.9)$ indicates that link 3 is strongly affected by the FIFO restriction but link 2 is not. We assume $\delta_1=4$ is the desired inflow to link 1. 

Let $(x(t),y(t))$ be the trajectory of the symmetric system \eqref{eq:9}--\eqref{eq:9-2} with initial condition $x_0=x(0)=(0,0,0)$ and $y_0=y(0)=(\bar{x}_1,\bar{x}_2,\bar{x}_3)=(6,4,2)$ where $g(x,y)$ is constructed as in the proof of Theorem \ref{thm:MM}. We have
\begin{alignat}{2}
  \label{eq:10}
  g(x_0,y_0)&=(\delta_1,0,0)^T \geq 0,\\
 g(y_0,x_0)&=(0,-d_2(\bar{x}_2),-d_3(\bar{x}_3))^T \leq 0,
\end{alignat}
that is, $(g(x_0,y_0),  g(y_0,x_0)^T)\succeq 0$ where we recall that $\succeq$ is the partial order induced by the orthant $\mathbb{R}^3_{\geq 0}\times \mathbb{R}^3_{\leq 0}$. It follows that $(x(t),y(t))$ is increasing with respect to $\succeq$ \cite[Ch. 3, Prop. 2.1]{Smith:2008fk}, and thus converges to an equilibrium $(x^*,y^*)$. By symmetry, we have that also $(y(t),x(t))$ is a trajectory of \eqref{eq:9}--\eqref{eq:9-2} converging to $(y^*,x^*)$. If $x^*=y^*=:\dense$, then we must have that $\dense$ is an equilibrium of the traffic flow dynamics defined by \eqref{eq:42}--\eqref{eq:42-2}. Furthermore, because the symmetric system is monotone with respect to $\succeq$, we must have that all trajectories of \eqref{eq:7}--\eqref{eq:7-2} converge to $(\dense,\dense)$, which in turn implies that $\dense$ is globally attractive for the traffic flow dynamics.

 In the top row of Figure \ref{fig:plots}, we plot the demand and supply curves given in \eqref{eq:7}--\eqref{eq:7-2}. We establish global convergence by verifying that an equilibrium $(x^*,y^*)$ of \eqref{eq:9}--\eqref{eq:9-2} satisfies $x^*=y^*$. %
The particular form of the supply and demand functions does not affect the qualitative behavior and \eqref{eq:7}--\eqref{eq:7-2} is chosen to be illustrative. The bottom row of Figure \ref{fig:plots} shows the trajectories $(x(t),y(t))$ and $(y(t),x(t))$ projected to the $x_l$$y_l$-plane for each $l$.
 \begin{figure}
   \centering
   \begin{tabular}{@{}c@{}@{}c@{}@{}c@{}}
\includegraphics[scale=.45,clip=true,trim=.05in 0.1in 0in 0.1in]{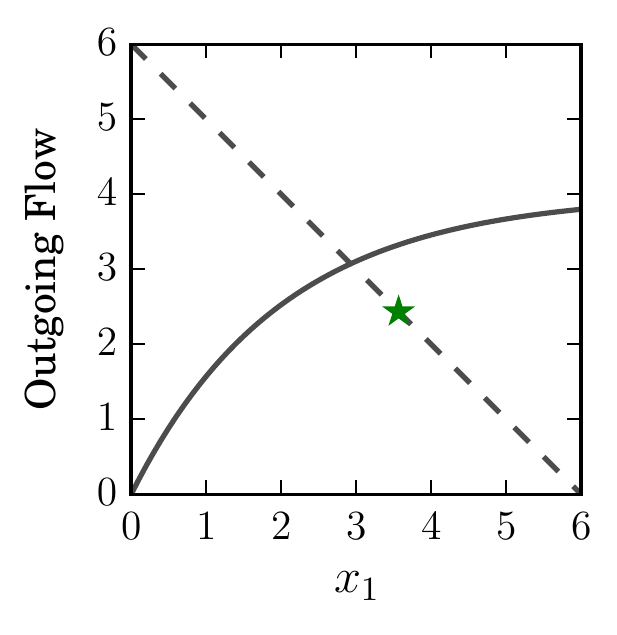}&     \includegraphics[scale=.45,clip=true,trim=.05in 0.1in 0in 0.1in]{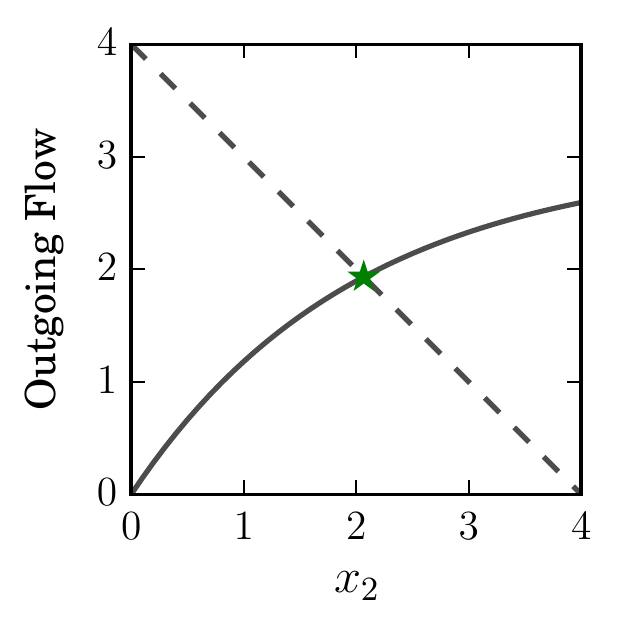}&  \includegraphics[scale=.45,clip=true,trim=.05in 0.1in 0in .1in]{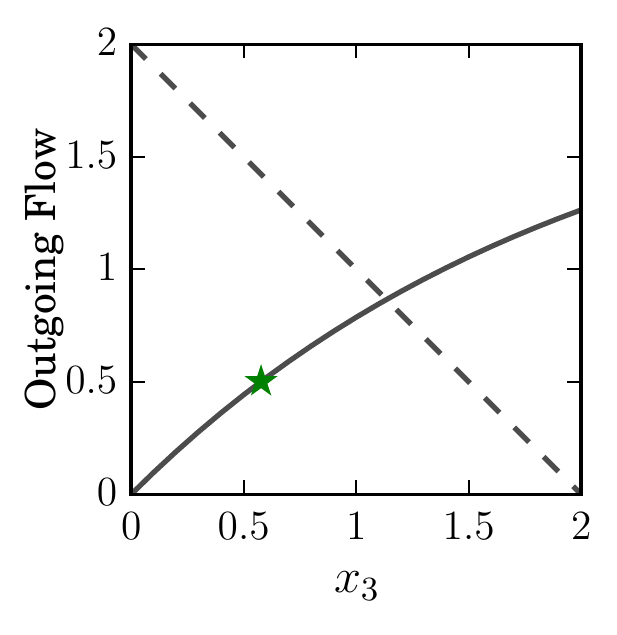}\\[-3pt]
 \includegraphics[scale=.45,clip=true,trim=.05in 0.1in 0in 0.1in]{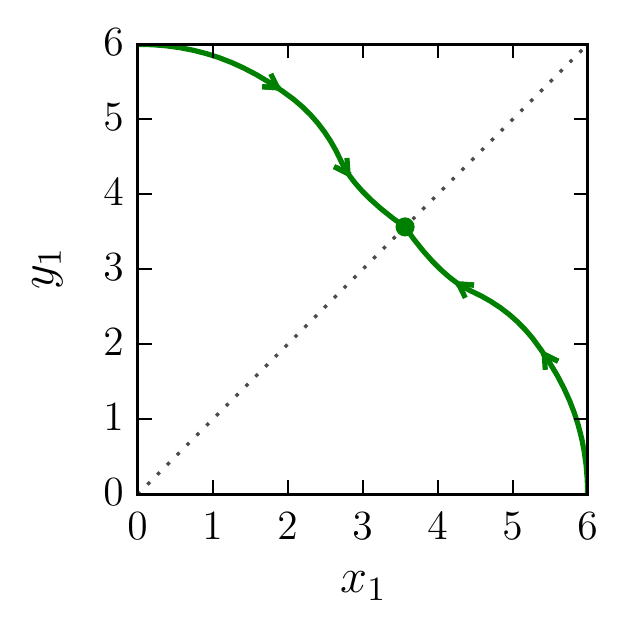}&\includegraphics[scale=.45,clip=true,trim=.05in 0.1in 0in .1in]{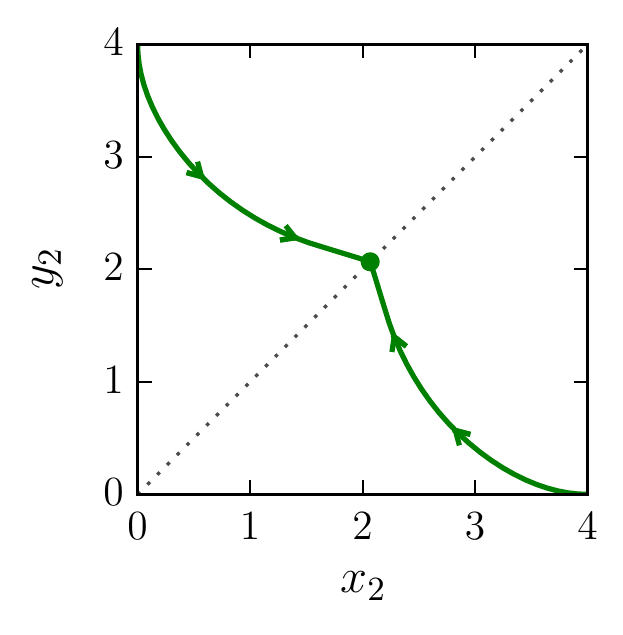}&\includegraphics[scale=.45,clip=true,trim=.05in 0.1in 0in 0.1in]{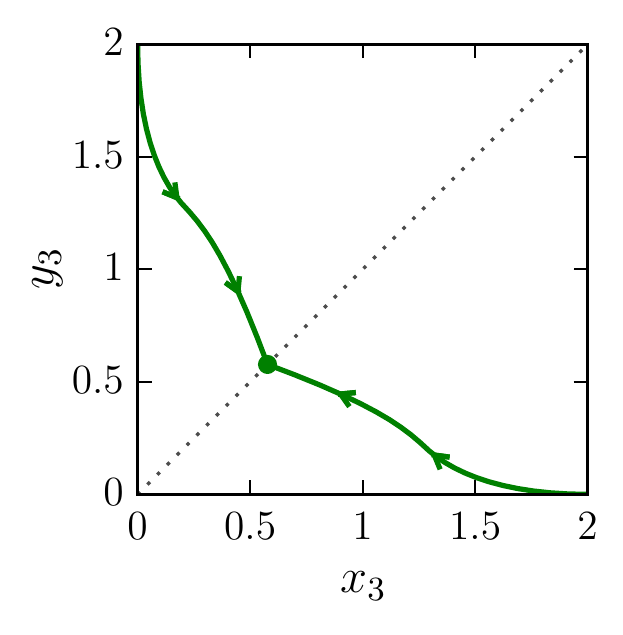}\\
\multicolumn{3}{c}{\includegraphics[scale=.45]{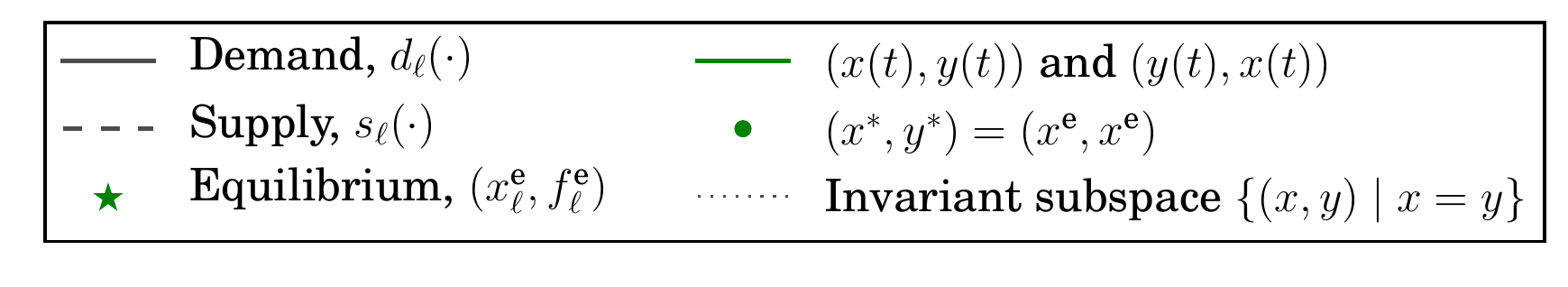}}
   \end{tabular}

   \caption{\noindent \textbf{(top)} Supply and demand curves for each link $l\in\{1,2,3\}$ of the diverging junction show in Figure \ref{fig:div} for the example in Section \ref{sec:example}. \textbf{(bottom)} Trajectories of the corresponding embedding system \eqref{eq:9}--\eqref{eq:9-2}. Mixed monotonicity ensures the embedding system is monotone with respect to the orthant $\mathbb{R}^n_{\geq 0}\times \mathbb{R}^n_{\leq 0}$. Convergence of the symmetric extreme trajectories $(x(t),y(t))$ and $(y(t),x(t))$ for $x(0)=0$, $y(0)=\bar{x}$ implies global convergence. The traffic flow dynamics are captured on a lower dimensional subspace of the embedding system, indicated by the dotted line.}
   \label{fig:plots}
 \end{figure}

\section{Conclusions}
\label{sec:conclusions}
We have proposed a general model for traffic flow networks that encompasses existing models and allows for a relaxed FIFO assumption at diverging junctions. We have shown that this general model is mixed monotone and have demonstrated how mixed monotonicity is used for system analysis such  as establishing global convergence. In future work, we will use mixed monotonicity to characterize traffic flow behavior for, \emph{e.g.}, cyclic networks, networks with excess demand, or networks with time-varying demand.

\section*{Acknowledgements}
The authors thank Gabriel Gomes and Giacomo Como for fruitful discussion regarding the limitations of non-FIFO and full FIFO traffic flow models.

\bibliographystyle{ieeetr}

\end{document}